\documentclass[12pt,aps]{article}
\usepackage[top=22truemm,bottom=22truemm,left=20truemm,right=20truemm]{geometry}%margin

\usepackage{authblk}

\usepackage[utf8]{inputenc}
\usepackage{amssymb}
\usepackage{braket}
\usepackage{amsfonts}
\usepackage{amsmath}
\usepackage{bm}
\usepackage{amsthm}
\usepackage{csquotes}
\usepackage{braket}
\usepackage{enumitem}
\usepackage{xcolor}
\usepackage{hyperref}
\usepackage{pgfplots}

 \makeatletter
\def\@fnsymbol#1{\ensuremath{\ifcase#1\or \natural\or \sharp\or \flat\or
    \mathparagraph\or \|\or **\or \dagger\dagger
   \or \ddagger\ddagger \else\@ctrerr\fi}}
    \makeatother

%%%%%%%%%%newcommand%%%%%%%%%%%%%%

\newcommand{\arxiv}[1]{\href{http://arxiv.org/abs/#1}{\texttt{arXiv\string:\allowbreak#1}}}

\theoremstyle{plain}
\newtheorem{thm}{Theorem}[section]

\newtheorem{prop}[thm]{Proposition}

\theoremstyle{definition}
\newtheorem{defin}[thm]{Definition}

%%%%%%% draft
%\usepackage[draft]{showlabel}
%\renewcommand{\showlabelfont}{\scriptsize\slshape\color{blue}}

\makeatletter

%%%%%%%%%%%%%%%%%%%%%%%%%%%%%%%%%

\begin{document} 

\title{Harmony for 2-Qubit Entanglement\thanks{Alberta-Thy-6-19}}

%%%%%%%% Standard Article Style

\author{Kento Osuga\thanks{osuga@ualberta.ca} }
\author{Don N. Page\thanks{profdonpage@gmail.com}}
\affil{
\textit{Theoretical Physics Institute, Department of Physics,  4-181 CCIS}, \\\textit{University of Alberta, Edmonton, Alberta T6G 2E1, Canada}
}
\date{World Music Day, 2019 June 21}
\maketitle

\begin{abstract}
In this Letter we present a new quantity that shows whether two general qubit systems are entangled, which we call \textit{harmony}. It captures the notion of separability and maximal entanglement. It is also shown that harmony is monogamous for 3-qubit states. Thus, harmony serves as a new entanglement measure. In addition, since it is written as a simple function of the density operator, it is in practice easier to compute than other previously known measures.
\end{abstract}

%%%%%%%% Physical Review A Style
%
%\author{Kento Osuga}
% \email{osuga@ualberta.ca}
% \author{Don N. Page}
% \email{profdonpage@gmail.com}
%\affiliation{
%Theoretical Physics Institute, Department of Physics, University of Alberta, \\
% 4-181 CCIS, Edmonton, Alberta T6G 2E1, Canada
%}
%\date{\today}
%\begin{abstract}
%In this letter we present a new criterion that shows whether two general qubit systems are entangled. It captures the notion of separability and maximal entanglement. For a 2-qubit state reduced from a 3-qubit pure state, the criterion is monogamous. Thus, it serves as a new entanglement measure. In addition, since it is written in terms of simple linear algebraic operation such, it is practically easier to compute than other previously known measures. 
%\end{abstract}
%
%\maketitle
%

%%%%%%%%%%%%%%%%%%%%%%%%%%%

\section{Introduction}
Besides its own interesting features, quantum entanglement has become a research trend due to a notable connection between quantum aspects of black holes and quantum information theory. However, it is actually not so easy in general to tell whether two systems are entangled when the two systems are in a mixed state. In particular, the von Neumann entropy does not serve anymore as a measure of quantum entanglement for a mixed state. As of today there is no universal measure that is also practically calculable. Yet, one of the most realistic measures is the entanglement of formation:

\begin{defin}
Let $\rho$ be a density operator of a bipartite system $H_A\otimes H_B$. We consider all pure state decompositions of $\rho$ with nonnegative probabilities $p_k$ summing to unity,
\begin{equation}
\rho=\sum_{k=1}^{K}p_k\ket{\phi_k}\bra{\phi_k},\label{EoF1}
\end{equation}
where $\ket{\phi_k}$ are pure states in $H_A\otimes H_B$, and $K$ can be greater than $\dim H_A\times\dim H_B$ in general. Then, the \emph{entanglement of formation} $E(\rho)$ is defined as
\begin{equation}
E(\rho)=\text{min}\sum_{k=1}^{K}p_kE_v(\phi_k)\geq0,\label{EoF2}
\end{equation}
where the minimization is taken over all decompositions of $\rho$, and $E_v(\phi_k)=-\text{tr}(\rho_A\log\rho_A)=-\text{tr}(\rho_B\log\rho_B)$ is the von Neumann entropy of each subsystem in the bipartite pure state $\ket{\phi_k}$.
\end{defin}

By definition, two systems are separable if and only if $E(\rho)=0$. The entanglement of formation is still difficult to compute for general bipartite mixed states. If we focus on 2-qubit states, however, Wootters showed \cite{Wootters} a relatively easy way of computing the entanglement of formation:

\begin{thm}[\cite{Wootters}]\label{Wootters}
Let $\rho$ be the density operator of a 2-qubit system. For the Pauli matrix $\sigma_y$, we define the spin-flipped density operator $\tilde{\rho}$
\begin{equation}
\tilde{\rho}=(\sigma_y\otimes\sigma_y)\rho^*(\sigma_y\otimes\sigma_y),\label{rho tilde}
\end{equation}
where $\rho^*$ is the complex conjugate of $\rho$ in the standard product basis. It can be shown that  the eigenvalues of the non-Hermitian operator $\rho\tilde{\rho}$ are nonnegative. Let $\lambda_i\in\mathbb{R}_{\geq0}$ be the square roots of these eigenvalues in decreasing order\footnote{Equivalently, $\lambda_i$'s are the nonnegative eigenvalues of the positive-semidefinite Hermitian operator $R=\sqrt{\sqrt{\rho}\tilde{\rho}\sqrt{\rho}}$ in decreasing order.}. Then, the entanglement of formation $E(\rho)$ is given by
\begin{align}
E(\rho)&=h\left(\frac{1+\sqrt{1-C(\rho)^2}}{2}\right),\\
h(x)&=-x\log x-(1-x)\log(1-x),\\
C(\rho)&=\max\{0,\lambda_1-\lambda_2-\lambda_3-\lambda_4\}.\label{C}
\end{align}
Furthermore, $E(\rho)$ is a monotonically increasing function of $C(\rho)$. Accordingly, $E(\rho)=0$ if and only if $C(\rho)=0$, and $E(\rho)=\log2$ (its maximum value) if and only if $C(\rho)=1$.
\end{thm}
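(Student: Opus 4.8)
The plan is to reduce the theorem first to a statement purely about pure states, then to convexity of a single scalar function, and finally to a $4\times4$ matrix optimization; essentially all of the work lives in the last step.

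\emph{Step 1: reduction to a function of the concurrence on pure states.} For a pure state $\ket{\phi}$ set $\ket{\tilde\phi}=(\sigma_y\otimes\sigma_y)\ket{\phi^*}$ and define its \emph{concurrence} $C(\phi)=|\braket{\phi|\tilde\phi}|\in[0,1]$. I would first show
\[
E_v(\phi)=\mathcal{E}\big(C(\phi)\big),\qquad \mathcal{E}(C)=h\!\left(\frac{1+\sqrt{1-C^2}}{2}\right).
\]
This is a direct Schmidt-decomposition computation: writing $\ket{\phi}=\sum_{ab}c_{ab}\ket{a}\ket{b}$, one checks $|\braket{\phi|\tilde\phi}|=2|\det c|$, so the two eigenvalues of the reduced density matrix are $\mu_\pm=(1\pm\sqrt{1-C^2})/2$ (they sum to $1$ and have product $|\det c|^2=C^2/4$), whence $E_v(\phi)=h(\mu_+)=\mathcal{E}(C(\phi))$. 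I would also record that $\mathcal{E}$ is continuous, strictly increasing and convex on $[0,1]$ with $\mathcal{E}(0)=0$ and $\mathcal{E}(1)=\log2$.

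\emph{Step 2: convex-roof reduction.} Define the mixed-state concurrence as the convex roof $C(\rho)=\min\sum_k p_k C(\phi_k)$ over the decompositions \eqref{EoF1}. For an arbitrary decomposition, monotonicity and convexity of $\mathcal{E}$ together with Jensen's inequality give
\[
\sum_k p_k E_v(\phi_k)=\sum_k p_k\,\mathcal{E}\big(C(\phi_k)\big)\;\geq\;\mathcal{E}\!\Big(\sum_k p_k C(\phi_k)\Big)\;\geq\;\mathcal{E}\big(C(\rho)\big),
\]
so $E(\rho)\geq\mathcal{E}(C(\rho))$. For the reverse inequality I must exhibit one decomposition that is optimal for \emph{both} steps at once, i.e.\ that attains the minimal average concurrence \emph{and} has all its $C(\phi_k)$ equal (so that Jensen is saturated); producing such a decomposition is deferred to Step 3. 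Granting it, $E(\rho)=\mathcal{E}(C(\rho))$, and the whole theorem reduces to proving $C(\rho)=\max\{0,\lambda_1-\lambda_2-\lambda_3-\lambda_4\}$, after which the stated formula for $E(\rho)$ and its monotonicity in $C(\rho)$ are just properties of $\mathcal{E}$.

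\emph{Step 3: the matrix optimization for $C(\rho)$.} Fix the eigendecomposition $\rho=\sum_j\ket{w_j}\bra{w_j}$ with subnormalized eigenvectors; every decomposition is $\ket{\psi_i}=\sum_j U_{ij}^*\ket{w_j}$ with $U^\dagger U=\mathbb{1}$ (orthonormal columns). Since the tilde map is antilinear and $\sigma_y\otimes\sigma_y$ is symmetric, the matrix $\tau_{ij}=\braket{\psi_i|\tilde\psi_j}$ is complex symmetric, transforms as $\tau=U\tau^{(w)}U^T$ with $\tau^{(w)}_{ij}=\braket{w_i|\tilde w_j}$, and satisfies $p_k C(\phi_k)=|\tau_{kk}|$; thus the average concurrence is $\sum_k|\tau_{kk}|$. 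A short computation shows $\rho\tilde\rho$ is represented by $\tau^{(w)}(\tau^{(w)})^\dagger$ in the eigenbasis, so its eigenvalues are manifestly nonnegative and equal the $\lambda_i^2$. By the Autonne--Takagi factorization $\tau^{(w)}=V\Lambda V^T$ with $V$ unitary and $\Lambda=\mathrm{diag}(\lambda_1,\dots,\lambda_4)$, $\lambda_i\geq0$; absorbing $V$ into $U$ we must minimize $\sum_i|(U\Lambda U^T)_{ii}|=\sum_i\big|\sum_k U_{ik}^2\lambda_k\big|$ over matrices with orthonormal columns. The lower bound is then immediate from the reverse triangle inequality and $\sum_i|U_{ik}|^2=1$:
\[
\sum_i\Big|\sum_k U_{ik}^2\lambda_k\Big|\;\geq\;\lambda_1\sum_i|U_{i1}|^2-\sum_{k\geq2}\lambda_k\sum_i|U_{ik}|^2\;=\;\lambda_1-\lambda_2-\lambda_3-\lambda_4,
\]
and trivially $\geq0$.

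\emph{Achievability and the main obstacle.} What remains --- and is the crux of the argument --- is to construct a unitary $U$ attaining the bound with all concurrences equal. When $\lambda_1>\lambda_2+\lambda_3+\lambda_4$ I would look for $U$ with $|U_{ik}|^2=\tfrac14$ and phases chosen so that $U_{i1}^2\lambda_1$ and $-U_{ik}^2\lambda_k$ ($k\geq2$) align, forcing every $\tau_{ii}=\tfrac14(\lambda_1-\lambda_2-\lambda_3-\lambda_4)$; when $\lambda_1\leq\lambda_2+\lambda_3+\lambda_4$ the four numbers $\lambda_k$ can close into a (possibly degenerate) quadrilateral, so phases exist making every $\tau_{ii}=0$ and hence $C(\rho)=0$. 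The obstacle is twofold: one must verify that the required phase assignments actually assemble into a genuine unitary $U$ (this is where an explicit ``magic-basis'' Hadamard-type matrix is needed), and one must confirm that the resulting states can be taken to have equal norm, so that the common value $|\tau_{ii}|$ yields equal concurrences $C(\phi_i)$ and closes the gap left in Step 2. Once both are in hand, $C(\rho)=\max\{0,\lambda_1-\lambda_2-\lambda_3-\lambda_4\}$, and feeding this into $\mathcal{E}$ gives the theorem, with the final equivalences $E=0\iff C=0$ and $E=\log2\iff C=1$ coming from $\mathcal{E}(0)=0$, $\mathcal{E}(1)=\log2$ and strict monotonicity.
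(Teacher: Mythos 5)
Your outline is, in substance, a reconstruction of Wootters' original argument --- which is also the only proof on offer here, since the paper itself does not prove this theorem but imports it wholesale from \cite{Wootters}. The pieces you do carry out are correct and are exactly his: the pure-state identity $E_v(\phi)=\mathcal{E}(|\braket{\phi|\tilde\phi}|)$ via $|\braket{\phi|\tilde\phi}|=2|\det c|$, the convex-roof/Jensen reduction, the parametrization of decompositions by matrices with orthonormal columns, the symmetric matrix $\tau=U\tau^{(w)}U^T$, the identification of the $\lambda_i^2$ with the eigenvalues of $\tau^{(w)}(\tau^{(w)})^{\dagger}$ (which also proves the nonnegativity claim in the statement), the Takagi factorization, and the reverse-triangle-inequality lower bound $\sum_i|\tau_{ii}|\geq\lambda_1-\lambda_2-\lambda_3-\lambda_4$. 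Moreover, the first of your two stated ``obstacles'' is not actually an obstacle: taking $H$ to be the real Hadamard-type matrix with entries $\pm\frac12$, the matrix $U=H\,\mathrm{diag}(1,i,i,i)$ is manifestly unitary and gives $U_{ik}^2=\pm\frac14$ with exactly the sign pattern you want, so every $\tau_{ii}=\frac14(\lambda_1-\lambda_2-\lambda_3-\lambda_4)$; likewise $U=H\,\mathrm{diag}(e^{i\theta_1},\dots,e^{i\theta_4})$ with polygon-closing phases $\sum_k e^{2i\theta_k}\lambda_k=0$ settles the separable case completely, since there all concurrences are zero and Jensen is trivially saturated.

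The genuine gap is the second obstacle, which you name but leave unresolved, and it is the crux of the entangled case. The Hadamard decomposition makes the \emph{subnormalized} quantities $\tau_{ii}=\langle z_i|\tilde z_i\rangle$ all equal to $\frac14 C$, so the \emph{average} concurrence equals $C(\rho)$; but the probabilities $p_i=\braket{z_i|z_i}$ are generically unequal (they are governed by the Gram matrix $\braket{x_j|x_k}$ of the Takagi states, which are not orthogonal and not at your disposal), so the normalized concurrences $\tau_{ii}/p_i$ straddle $C(\rho)$ and Jensen's inequality is strict. At this point you have only $E(\rho)\geq\mathcal{E}(C(\rho))$; the upper bound $E(\rho)\leq\mathcal{E}(C(\rho))$, and with it the theorem, is missing precisely for entangled states. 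Your proposed repair --- force the states to have equal norm --- is not justified and is not how the proof goes: you cannot simultaneously impose $|U_{ik}|^2=\frac14$ (needed for the phase cancellation) and prescribe the norms. Wootters instead closes the gap with a further continuity argument: the $\tau$-matrix of the Hadamard decomposition is \emph{real} symmetric with all diagonal entries positive; additional real orthogonal rotations $\tau\mapsto O\tau O^T$ preserve reality and the trace $\sum_i\tau_{ii}=C$ while moving the individual normalized concurrences continuously, and an intermediate-value argument then yields a decomposition in which every pure state has concurrence exactly $C(\rho)$, saturating Jensen. Without that step (or some substitute for it), the proposal proves only half of the theorem.
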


$C(\rho)$ is called the \textit{concurrence}. It is also known \cite{Wootters2,Bures} that $0\leq\sum_i\lambda_i\leq1$. It is not obvious that the concurrence measures the entanglement of 2-qubit systems. However, Wootter's formula given above ensures that the concurrence is a well-defined measure of entanglement.

For determining which 2-qubit states are separable $(E(\rho)=C(\rho)=0)$ and which are entangled $(E(\rho)>0)$, Wootters' formula above is a bit awkward to evaluate, because one must calculate the individual eigenvalues $\lambda_i$'s and order them to find the largest one. Therefore, we present in this Letter a simpler criterion for entanglement.

\section{Harmony}

In this section, we motivate and define \textit{harmony} $H(\rho)$ as an entanglement measure simpler to compute than the concurrence $C(\rho)$, with the criterion for entanglement of a 2-qubit state being $H(\rho)>0$. This avoids the complexity of calculating the individual eigenvalues of $R=\sqrt{\sqrt{\rho}\tilde{\rho}\sqrt{\rho}}$ and of finding the largest one.
%
%Among all the measures of entanglement mentioned above, the concurrence is probably the easiest quantity to compute for general mixed 2-qubit states. To evaluate the concurrence $C(\rho)$, however, we need to derive $\{\lambda_1,\lambda_2,\lambda_3,\lambda_4\}$, the square roots of the eigenvalues of $\rho\tilde{\rho}$; which raises the computational cost for $4\times4$ matrices. Furthermore, the process of ordering the $\lambda_i$'s in decreasing order add additional computational complexity. Therefore, our idea is to define another entanglement measure given by a simpler function of $\rho$ and $\tilde{\rho}$ to avoid these obstacles.

\subsection{Quartet of Four Eigenvalues}
Let us consider a quantity that has no ordering complexity. The necessity of ordering the eigenvalues $\lambda_i$'s can be avoided by considering the following symmetric product $D(\rho)$ in terms of $\lambda_i$'s, which we call the \textit{disharmony}:
\begin{equation}
D(\rho)=(-\lambda_1+\lambda_2+\lambda_3+\lambda_4)(\lambda_1-\lambda_2+\lambda_3+\lambda_4)(\lambda_1+\lambda_2-\lambda_3+\lambda_4)(\lambda_1+\lambda_2+\lambda_3-\lambda_4).\label{h}
\end{equation}
With the assumption $\lambda_1\geq\lambda_2\geq\lambda_3\geq\lambda_4$, $D(\rho)$ is negative if the concurrence is positive, and $D(\rho)$ is nonnegative if the concurrence vanishes. Thus, we are motivated to define what we call the \textit{harmony} $H(\rho)$ as
\begin{equation}
H(\rho)=\text{max}\{0,-D(\rho)\}.
\end{equation}
We emphasize that the harmony is independent of the ordering of the $\lambda_i$'s. In terms of the concurrence and the smallest three eigenvalues, the harmony is
\begin{equation}
H(\rho)=C(\rho)\left(C(\rho)+2\lambda_3+2\lambda_4\right)\left(C(\rho)+2\lambda_4+2\lambda_2\right)\left(C(\rho)+2\lambda_2+2\lambda_3\right).\label{H by C}
\end{equation}
This clearly shows that $H(\rho)=0\Leftrightarrow C(\rho)=0$. Therefore, the harmony serves as an entanglement measure. 

Note that the harmony is not purely a function of the concurrence. Putting it another way, for a fixed value of the concurrence $C$, there is a range from $H_{\text{min}}(C)$ to $H_{\text{max}}(C)$  that $H(\rho)$ can vary. Since all $\lambda_i$'s are nonnegative, and since their sum does not exceed unity, if there are no other constraints on them, it immediately follows that $H_{\text{min}}(C)=C^4$ when $\lambda_1=C$ and $\lambda_2=\lambda_3=\lambda_4=0$. Also, it is clear from Eq. \eqref{H by C} that we have $\sum_i\lambda_i=1$ when the harmony is maximized at fixed $C$, so that the sum of the last three factors in Eq. \eqref{H by C} is $2+C$. In this case the product of the three factors is maximized when each factor is equal, so each is $(2+C)/3$. This gives
\begin{equation}
H_{\text{max}}(C)=\frac{C(2+C)^3}{27},\label{max}
\end{equation}
where $H(\rho)=H_{\text{max}}(C)$ if and only if
\begin{equation}
\lambda_1=\frac{1+C}2,\;\;\;\;\lambda_2=\lambda_3=\lambda_4=\frac{1-C}{6}.
\end{equation}
In particular, since the concurrence goes from 0 to 1, Eq. \eqref{max} implies that $H(\rho)\leq1$, where the inequality is saturated if and only if $C(\rho)=1$. Therefore, the harmony is maximized for the maximally entangled states, which maximize the concurrence.

%\begin{equation}
%C+2\lambda_2+2\lambda_3+2\lambda_4=1,
%\end{equation}
%and accordingly, $H_{\text{max}}(C)$ can be written in the form
%\begin{equation}
%H_{\text{max}}(C)=C(1-2\lambda_3)(1-2\lambda_4)(C+2\lambda_3+2\lambda_4).
%\end{equation}
%This is simply a concave quadratic polynomials in $\lambda_3$ and $\lambda_4$; hence it can be shown that $H_{\text{max}}(C)$ has the form

\subsection{Practice}

We have shown how the harmony $H(\rho)$ is by construction independent of the ordering of $\lambda_i$'s, which overcomes one of the two computational difficulties mentioned above for calculating the concurrence $C(\rho)$. However, it is also somewhat awkward to calculate the individual eigenvalues. Therefore, we now show that the harmony can be written more directly as a simple function of the density operator $\rho$.

Let us expand the disharmony $D(\rho)$ in Eq. \eqref{h} and combine terms into the following way:
\begin{equation}
D(\rho)=-2\sum_{i=1}^4\lambda_i^4+\left(\sum_{i=1}^4\lambda_i^2\right)^2+8\lambda_1\lambda_2\lambda_3\lambda_4.
\end{equation}
Then, notice that these three terms can be written as
\begin{equation}
D(\rho)=-2\;{\rm tr}\left[(\rho\tilde{\rho})^2\right]+\left[{\rm tr}(\rho\tilde{\rho})\right]^2+8\det\rho,
\end{equation}
which, since $\rho^* = \rho^T$, is an analytic function (a multivariate 4th-order polynomial) in the components of the density matrix $\rho$. Each term in the above expression can be computed by a simple operation, simpler than obtaining the eigenvalues $\lambda_i$'s. As a consequence, the harmony can be thought of as a practically more easily computable entanglement measure for 2-qubit states than the concurrence or entanglement of formation.

In summary, we give a simple definition and a theorem for the harmony of a 2-qubit quantum state.

\begin{defin}
Let $\rho$ be the density operator of a 2-qubit system, and let $\tilde{\rho}$ be the density operator given by Eq. \eqref{rho tilde}. Then the \emph{harmony} $H(\rho)$ of the state $\rho$ is defined as
\begin{equation}
H(\rho)={\rm max}\{0,2\;{\rm tr}\left[(\rho\tilde{\rho})^2\right]-\left[{\rm tr}(\rho\tilde{\rho})\right]^2-8\det\rho\}.
\end{equation}
\end{defin}

\begin{thm}\label{thm}
Let $\rho$ be the density operator of a 2-qubit system and $H(\rho)$ be its harmony. Then the following statements are true:
\begin{itemize}
\item The two qubits are separable if and only if $H(\rho)=0$.
\item The two qubits are maximally entangled if and only if $H(\rho)=1$.
\end{itemize}
\end{thm}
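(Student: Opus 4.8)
The plan is to reduce both equivalences to Wootters' characterization in Theorem~\ref{Wootters}---that a 2-qubit state is separable iff $C(\rho)=0$ and maximally entangled iff $C(\rho)=1$---and then to control $H(\rho)$ entirely through the concurrence and the ordered square roots $\lambda_1\geq\lambda_2\geq\lambda_3\geq\lambda_4\geq0$. The two tools I would use are the factorized expression~\eqref{H by C} and the bound $H(\rho)\leq H_{\text{max}}(C)$ with $H_{\text{max}}(C)=C(2+C)^3/27$ sketched above; everything then reduces to a sign analysis of the four factors of the disharmony $D(\rho)$ together with the monotonicity of $H_{\text{max}}$ on $[0,1]$.

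For the separability statement I would first examine the signs of the four factors of $D(\rho)$ in~\eqref{h} under the chosen ordering. The last three factors $\lambda_1-\lambda_2+\lambda_3+\lambda_4$, $\lambda_1+\lambda_2-\lambda_3+\lambda_4$, and $\lambda_1+\lambda_2+\lambda_3-\lambda_4$ are each nonnegative since $\lambda_1$ dominates each of the remaining three, and strictly positive whenever $\lambda_1>0$. The sign of $D(\rho)$ is therefore set by the first factor $-\lambda_1+\lambda_2+\lambda_3+\lambda_4=-(\lambda_1-\lambda_2-\lambda_3-\lambda_4)$, which is negative exactly when $C(\rho)>0$ and nonnegative exactly when $C(\rho)=0$. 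Hence $-D(\rho)>0\Leftrightarrow C(\rho)>0$, so $H(\rho)=\max\{0,-D(\rho)\}=0\Leftrightarrow C(\rho)=0$, and Theorem~\ref{Wootters} upgrades this to separability.

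For the maximal-entanglement statement I would work in the regime $C(\rho)>0$, the only one with $H>0$, and use~\eqref{H by C}. The key estimate is that the three remaining factors sum to $3C+4(\lambda_2+\lambda_3+\lambda_4)$; substituting $\lambda_2+\lambda_3+\lambda_4=\lambda_1-C$ and invoking $\sum_i\lambda_i=2\lambda_1-C\leq1$ bounds this sum by $2+C$, whereupon AM-GM on the three factors yields $H(\rho)\leq C(2+C)^3/27=H_{\text{max}}(C)$. Since $dH_{\text{max}}/dC=(2+C)^2(2+4C)/27>0$ on $[0,1]$ and $H_{\text{max}}(1)=1$, this gives $H(\rho)\leq1$ with equality forcing $C(\rho)=1$, and Theorem~\ref{Wootters} then yields maximal entanglement. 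Conversely, a maximally entangled state has $C(\rho)=1$, and then $2\lambda_1-C\leq1$ forces $\lambda_1\leq1$ while $\lambda_1-\lambda_2-\lambda_3-\lambda_4=1$ forces $\lambda_2=\lambda_3=\lambda_4=0$ and $\lambda_1=1$, so~\eqref{H by C} collapses to $H(\rho)=1$.

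The step I expect to demand the most care is the equality analysis in the maximal-entanglement direction: I must verify that the AM-GM bound and the inequality $\sum_i\lambda_i\leq1$ can be saturated simultaneously only at the unique configuration $\lambda_1=1$, $\lambda_2=\lambda_3=\lambda_4=0$, and that the strict monotonicity of $H_{\text{max}}$ genuinely excludes any $C<1$ from reaching $H=1$. The separability direction is comparatively routine once the sign pattern of the factors is fixed, and both directions ultimately rest on~\eqref{H by C} and the prior fact $0\leq\sum_i\lambda_i\leq1$.
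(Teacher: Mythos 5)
Your proposal is correct and takes essentially the same route as the paper: both rest on the factorization \eqref{H by C} (sign analysis of the disharmony factors) to get $H(\rho)=0\Leftrightarrow C(\rho)=0$, and on the constraint $\sum_i\lambda_i\leq1$ plus AM-GM to get $H(\rho)\le H_{\text{max}}(C)=C(2+C)^3/27$, with Theorem~\ref{Wootters} then converting both statements into separability and maximal entanglement. The only minor slip is your claim that the last three factors of $D(\rho)$ are strictly positive whenever $\lambda_1>0$ (false when $\lambda_1=\lambda_2$ and $\lambda_3=\lambda_4=0$); what you actually need---and what holds---is that each of the last three factors in \eqref{H by C} is at least $C(\rho)$, hence strictly positive whenever $C(\rho)>0$.
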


Wootters' formula (Theorem~\ref{Wootters}) implies that the states with $H(\rho)=1$ also maximize the concurrence $C(\rho)$ and entanglement of formation $E(\rho)$, and vice versa. It is shown in \cite{Max} that every maximally entangled state in a $d\times d$ bipartite state is indeed pure. Then it follows that most other entanglement measures, including distillable entanglement \cite{DE}, entanglement cost \cite{EC}, relative entropy of entanglement \cite{RE}, squashed entanglement \cite{SQ1,SQ2}, and negativity \cite{N}, are maximized by the states that gives $H(\rho)=1$.

\subsection{Harmony is not Monotone}

It can be explicitly shown that harmony is invariant under local unitary transformations $\rho\mapsto U_A\otimes U_B\cdot\rho\cdot U_A^{\dagger}\otimes U_B^{\dagger}$. On the other hand, harmony is not a convex function over the set of density operators, as one can see from the following three density operators $\rho_{\pm},\rho$ with $0\leq x\leq1$, the first two of which are pure:
\begin{equation}
\rho_{\pm}=\frac{1\pm x}2\ket{00}\bra{00}+\frac{\sqrt{1-x^2}}2(\ket{00}\bra{11}+\ket{11}\bra{00})+\frac{1\mp x}2\ket{11}\bra{11},
\end{equation}
\begin{equation}
\rho=\frac12\rho_++\frac12\rho_-.
\end{equation}
Then, we have
\begin{equation}
H(\rho)=1-x^2\geq(1-x^2)^2=\frac12H(\rho_+)+\frac12H(\rho_-).
\end{equation}
Thus, harmony is not an entanglement monotone \cite{Monotone}. In exchange for the loss of monotonicity, harmony has the advantage that it can be evaluated by a simpler computation than the negativity or the concurrence, which require the ordered set of eigenvalues of the corresponding $4\times4$ matrices.

%
%\textcolor{red}{We may cut the paragraph below out because it takes much space even thuogh it is not so important for our main message. I will make the list more precise if we agree with adding this section to our paper.}
%
%Moreover, \cite{Axioms2} gives the list of properties where we expect entanglement measures $E(\rho)$ to satisfy some of them but not necessarily all. The list can be briefly summarized as follows:
%\begin{itemize}
%\item[(E0)] $\rho$ is separable if and only if $E(\rho)=0$,
%\item[(E1)] $E(\rho)$ is maximized if $\rho$ is a maximally entangled pure state.
%\item[(E2)] $E(\rho)$ is invariant under local unitary transformation,
%\item[(E3)] Continuity,
%\item[(E4)] Additivity,
%\item[(E5)] Subadditivity,
%\item[(E6)] Convexity.
%\end{itemize}
%Harmony satisfies E0-E2 \textcolor{red}{and I would be very surprised if harmony does not satisfy E3, but at the same time I haven't proven or disproven rigorously. Note that the precise meaning of continuity is given in \cite{Axioms2}.}
%

\subsection{Performance for Pure States}

Even though our purpose of introducing the harmony is to define an entanglement measure applicable to mixed states, let us consider the harmony for pure 2-qubit states. Setting the basis of a 2-qubit system as $\{\ket{00},\ket{01},\ket{10},\ket{11}\}$, the $\sigma_y\otimes\sigma_y$ spin-flip operation transforms the basis by
\begin{equation}
\ket{00}\mapsto-\ket{11},\;\;\;\;\ket{11}\mapsto-\ket{00},\;\;\;\;\ket{01}\mapsto\ket{10},\;\;\;\;\ket{10}\mapsto\ket{01}.
\end{equation}
For a general pure state
\begin{equation}
\ket{\psi}=a\ket{00}+b\ket{11}+c\ket{01}+d\ket{10}\;\;\;\;\text{with}\;\;\;\;|a|^2+|b|^2+|c|^2+|d|^2=1,
\end{equation}
one then gets
\begin{equation}
\ket{\tilde{\psi}}:=\sigma_y\otimes\sigma_y\ket{\psi}^*=-b^*\ket{00}-a^*\ket{11}+d^*\ket{01}+c^*\ket{10},
\end{equation}
\begin{equation}
H(\rho)=|\braket{\tilde{\psi}|\psi}|^4.
\end{equation}
Note that for any pure state $\rho=\ket{\psi}\bra{\psi}$, the purity $\gamma(\rho_A)=\text{tr}(\rho_A^2)$ of the reduced density operator $\rho_A=\text{tr}_B(\rho)$ and the concurrence $C(\rho)$ can be written as functions purely of the harmony $H(\rho)$:
\begin{align}
\gamma(\rho_A)&=1-\frac{1}2|\braket{\tilde{\psi}|\psi}|^2=1-\frac{1}2\sqrt{H(\rho)},\\
C(\rho)&=\sqrt{2[1-\gamma(\rho_A)]}=\sqrt{4\det\rho_A}=\left[H(\rho)\right]^{\frac14}.\label{C pure}
\end{align}
Therefore, for a pure 2-qubit state, the harmony is simply the fourth power of the concurrence, as one can also see from Eq. \eqref{H by C}, since then $\lambda_2=\lambda_3=\lambda_4=0$.

\section{Harmony of Monogamy}\label{sec:Monogamy}

We now present another piece of evidence that the harmony is a good measure of entanglement. To be more explicit, we prove that the harmony is monogamous for any 3-qubit pure state. Since $H(\rho)$ is not a function of $C(\rho)$, the argument given in \cite{Monogamy} for any monotonically increasing function $\Gamma(C)$ is not applicable, yet the discussion below is inspired by \cite{Monogamy}.

Let us first give a few facts to illustrate monogamy. Given a 3-qubit system $X,Y,Z$, we pick a pure state $\rho$. Then \cite{Wootters} there are at most two nonzero eigenvalues $\lambda_i$'s associated with $\rho_{YZ}\tilde{\rho}_{YZ}$. This implies that even though system $YZ$ in principle has 4 dimensions, the pair $(Y,Z)$ can be interpreted as having an effectively \textit{two-dimensional} Hilbert space. Moreover, we can consider the harmony $H_{X(YZ)}$ between $X$ and the pair $(Y,Z)$. Knowing from Eq. \eqref{C pure} that $C(\rho_{XW})=\sqrt{4\det\rho_X}$ for any pure state $\rho_{XW}$ in any 2-qubit system $X$ and $W$, we have
\begin{equation}
H_{X(YZ)}=(4\det\rho_X)^2=\left(C_{X(YZ)}\right)^4.\label{H3}
\end{equation}
The monogamy is an inequality among $H_{XY}$, $H_{XZ}$, and $H_{X(YZ)}$, where $H_{XY}=H(\rho_{XY})$:

\begin{prop}
Let $\rho$ be a pure state of a 3-qubit system $X,Y,Z$. Then, we have
\begin{equation}
H_{XY}+H_{XZ}\leq H_{X(YZ)}.\label{mono}
\end{equation}
\end{prop}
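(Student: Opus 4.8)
The plan is to dominate each two-qubit harmony on the left by a single squared trace, and then to prove that these two traces add up to exactly $\sqrt{H_{X(YZ)}}$, so that a one-line convexity estimate closes the inequality.

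First I would use the two-dimensionality noted above: since $\rho$ is pure, tracing out a single qubit leaves a two-qubit state of rank at most two, so $\rho_{XY}$ and $\rho_{XZ}$ each have $\lambda_3=\lambda_4=0$ and vanishing determinant. With $\lambda_3=\lambda_4=0$, Eq.~\eqref{H by C} collapses to
\[
H_{XY}=(\lambda_1^2-\lambda_2^2)^2\le(\lambda_1^2+\lambda_2^2)^2=\bigl[{\rm tr}(\rho_{XY}\tilde{\rho}_{XY})\bigr]^2,
\]
and likewise for $H_{XZ}$. Writing $s_{XY}:={\rm tr}(\rho_{XY}\tilde{\rho}_{XY})=\lambda_1^2+\lambda_2^2\ge0$ and similarly $s_{XZ}$, this step yields $H_{XY}\le s_{XY}^2$ and $H_{XZ}\le s_{XZ}^2$, trading the awkward signed combination of eigenvalues for a manifestly nonnegative trace.

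Next I would establish the key identity
\[
s_{XY}+s_{XZ}=4\det\rho_X=\sqrt{H_{X(YZ)}},
\]
the second equality being Eq.~\eqref{H3}. The natural way is to expand everything in the amplitudes $a_{ijk}$ of $\ket{\psi}=\sum a_{ijk}\ket{ijk}$. Writing $\ket{\chi_z}$ for the (unnormalized) $XY$-component of $\ket{\psi}$ at $Z=z$, one finds $s_{XY}=\sum_{z,z'}|\braket{\chi_z|\sigma_y\otimes\sigma_y|\chi_{z'}^*}|^2$; the diagonal terms $z=z'$ are each four times the modulus squared of a $2\times2$ minor of the $2\times4$ coefficient matrix $[a_{i,(jk)}]$, while the off-diagonal term is a linear combination of the two ``mixed'' minors. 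Those same minors, by the Lagrange identity, assemble $4\det\rho_X=2\sum_{p,q}|a_{0,p}a_{1,q}-a_{0,q}a_{1,p}|^2$. Summing $s_{XY}$ and $s_{XZ}$, the four diagonal contributions reproduce four of the six independent minors outright, and the two off-diagonal contributions recombine, through the parallelogram law $|S|^2+|S'|^2=\tfrac12\bigl(|S+S'|^2+|S-S'|^2\bigr)$, into precisely the remaining two. This bookkeeping---matching the spin-flip traces of the two bipartitions against the six minors of the $X$-versus-$YZ$ cut, and spotting the parallelogram recombination---is where I expect the real work to lie.

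Finally, since $s_{XY},s_{XZ}\ge0$ we have $s_{XY}^2+s_{XZ}^2\le(s_{XY}+s_{XZ})^2$, whence
\[
H_{XY}+H_{XZ}\le s_{XY}^2+s_{XZ}^2\le(s_{XY}+s_{XZ})^2=(4\det\rho_X)^2=H_{X(YZ)},
\]
which is \eqref{mono}. Notably this route never requires the ordering of the $\lambda_i$, nor even the Coffman--Kundu--Wootters inequality \cite{Monogamy} itself; the exact identity $s_{XY}+s_{XZ}=4\det\rho_X$ does all of the monogamy work.
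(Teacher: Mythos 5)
Your proof is correct and follows essentially the same route as the paper's: bound each two-qubit harmony by $\left[\mathrm{tr}(\rho\tilde{\rho})\right]^2$ using $\lambda_3=\lambda_4=0$, combine the two squares via nonnegativity of the traces, and identify $\mathrm{tr}(\rho_{XY}\tilde{\rho}_{XY})+\mathrm{tr}(\rho_{XZ}\tilde{\rho}_{XZ})=4\det\rho_X=C^2_{X(YZ)}$ with $\sqrt{H_{X(YZ)}}$ through Eq.~\eqref{H3}. The only difference is that the paper cites \cite{Monogamy} for that trace identity, whereas you re-derive it by the minor/parallelogram-law expansion; your bookkeeping there checks out (the four diagonal terms give four minors with coefficient $4$, and the two off-diagonal terms recombine into the remaining two), and it is in fact the computation underlying \cite{Monogamy} itself, so your argument is a self-contained version of the paper's proof.
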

\begin{proof}
The proof closely follows \cite{Monogamy}. Since only $\lambda_1$ and $\lambda_2$ are nonzero for $\rho_{XY}$, the harmony is
\begin{equation}
H_{XY}=(\lambda_1-\lambda_2)^2(\lambda_1+\lambda_2)^2=(\lambda_1^2+\lambda_2^2)^2-4\lambda_1^2\lambda_2^2\leq\left(\text{tr}(\rho_{XY}\tilde{\rho}_{XY})\right)^2.
\end{equation}
Then, it follows that
\begin{align}
H_{XY}+H_{XZ}&\leq\left(\text{tr}(\rho_{XY}\tilde{\rho}_{XY})\right)^2+\left(\text{tr}(\rho_{XZ}\tilde{\rho}_{XZ})\right)^2\nonumber\\
&\leq\left(\text{tr}(\rho_{XY}\tilde{\rho}_{XY})+\text{tr}(\rho_{XZ}\tilde{\rho}_{XZ})\right)^2=\left(C_{X(YZ)}\right)^4,
\end{align}
where the last equality uses the result of \cite{Monogamy}. Finally, Eq. \eqref{H3} gives the monogamy relation \eqref{mono}.
\end{proof}

Due to the non-convexity of the harmony, we are no longer able to apply the same argument that \cite{Monogamy} used to extend the monogamy of the concurrence for mixed 3-qubit states. However, we can still explore a monogamous relation for mixed states which is not so constrained as the one for the concurrence discussed in \cite{Monogamy}.

Let $\rho$ be the density operator for a mixed 3-qubit state, and then \cite{Monogamy} derived the following monogamous relation for the squares of the concurrences:
\begin{equation}
C^2_{XY}+C^2_{XZ}\leq (C^2)_{X(YZ)}^{\text{min}},\label{monoC}
\end{equation}
\begin{equation}
(C^2)_{X(YZ)}^{\text{min}}=\text{min}\sum_ip_iC^2_{X(YZ)i},\label{Cmin}
\end{equation}
where minimization is taken over all pure state decompositions, hence each $C^2_{X(YZ)i}$ still makes sense. It can be easily shown by Eq. \eqref{max} that $H(\rho)\leq C(\rho)$ for any density operator $\rho$. Then, we can rewrite the monogamous relation \eqref{monoC} for the harmony by using Eq. \eqref{C pure} as follows:

\begin{prop}
Let $\rho$ be a mixed state of a 3-qubit system $X,Y,Z$. Then, we have
\begin{equation}
H^2_{XY}+H^2_{XZ}\leq (H^{\frac12})_{X(YZ)}^{\text{min}},\label{mono2}
\end{equation}
where $(H^{\frac12})_{X(YZ)}^{\text{min}}$ is defined similarly to Eq. \eqref{Cmin}. 
\end{prop}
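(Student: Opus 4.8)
The plan is to reduce the harmony monogamy \eqref{mono2} entirely to the concurrence monogamy \eqref{monoC}, using only the two comparisons between $H$ and $C$ that are already available: the universal bound $H(\rho)\leq C(\rho)$ and the pure-state identity $H=C^4$ from Eq. \eqref{C pure}. The idea is to bound the left-hand side of \eqref{mono2} from above by $C^2_{XY}+C^2_{XZ}$, apply \eqref{monoC}, and then recognize the resulting right-hand side as exactly $(H^{\frac12})_{X(YZ)}^{\text{min}}$.

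First I would verify the bound $H(\rho)\leq C(\rho)$ asserted just before the statement. Since $H(\rho)\leq H_{\text{max}}(C)=C(2+C)^3/27$ by Eq. \eqref{max}, and since $C\leq1$ gives $(2+C)^3\leq27$, one obtains $H_{\text{max}}(C)\leq C$ and hence $H(\rho)\leq C(\rho)$. Applying this to the two reduced 2-qubit states $\rho_{XY}$ and $\rho_{XZ}$ and squaring, which preserves the inequality because all quantities are nonnegative, yields $H^2_{XY}\leq C^2_{XY}$ and $H^2_{XZ}\leq C^2_{XZ}$. Summing and invoking \eqref{monoC} then gives $H^2_{XY}+H^2_{XZ}\leq C^2_{XY}+C^2_{XZ}\leq (C^2)_{X(YZ)}^{\text{min}}$.

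It remains to identify $(C^2)_{X(YZ)}^{\text{min}}$ with the right-hand side of \eqref{mono2}. Here I would use that the minimization in Eq. \eqref{Cmin} runs over pure-state decompositions of $\rho_{X(YZ)}$, so that for each element $\rho_{X(YZ)i}$ the relation $H_{X(YZ)i}=C_{X(YZ)i}^4$ of Eq. \eqref{H3} applies, i.e. $H_{X(YZ)i}^{1/2}=C_{X(YZ)i}^2$. The two cost functions therefore agree term by term on every decomposition, so the minimizations coincide, $(C^2)_{X(YZ)}^{\text{min}}=(H^{\frac12})_{X(YZ)}^{\text{min}}$, and chaining the estimates delivers \eqref{mono2}.

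The main obstacle, modest as it is, lies in this last identification: one must justify applying $H=C^4$ to each pure state $\rho_{X(YZ)i}$, whose $(Y,Z)$ factor nominally spans a four-dimensional space on which harmony is not defined to begin with. The resolution is that $\rho_{X(YZ)i}$ is pure with qubit factor $X$, so its Schmidt rank is at most two and $(Y,Z)$ is effectively a single qubit; this is precisely the observation underlying Eq. \eqref{H3}, and it is what licenses Eq. \eqref{C pure} in that setting. The squaring step and the bound $H\leq C$ are immediate and should present no difficulty.
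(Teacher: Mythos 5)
Your proposal is correct and follows exactly the route the paper intends: the bound $H(\rho)\leq C(\rho)$ deduced from Eq.~\eqref{max}, the concurrence monogamy \eqref{monoC}, and the term-by-term identification $(C^2)_{X(YZ)}^{\text{min}}=(H^{\frac12})_{X(YZ)}^{\text{min}}$ via the pure-state relation $H=C^4$ of Eqs.~\eqref{C pure} and \eqref{H3}. Your added remark justifying why $H=C^4$ applies to each decomposition element (Schmidt rank at most two across the $X|(YZ)$ cut) makes explicit a point the paper leaves implicit, but the argument is the same.
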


In particular, we know that $(H^{\frac12})_{X(YZ)}^{\text{min}}\leq1$, which implies that $H_{XY}=1$ if and only if $H_{XZ}=0$ and vice versa. Therefore, even though the upper bound in the inequality \eqref{mono2} is not as constrained as the one in \eqref{mono} or the one for the concurrences \eqref{monoC}, the notion of monogamy still stands for mixed 3-qubit states as expected. More constrained monogamous relations remain to be seen.

\section{Conclusion}
We have introduced a new entanglement measure for 2-qubit states that we call harmony. We illustrated that it provides a criterion for separable states as well as for maximally entangled states. We also showed that the harmony is monogamous for 3-qubit states. Harmony behaves very similar to concurrence, but it is worth emphasizing that it is not a function purely of concurrence. In exchange for losing monotonicity, harmony is written as a simpler function of the density operator $\rho$,
\begin{equation}
H(\rho)={\rm max}\{0,2\;{\rm tr}\left[(\rho\tilde{\rho})^2\right]-\left[{\rm tr}(\rho\tilde{\rho})\right]^2-8\det\rho\},
\end{equation}
where $\tilde{\rho}=(\sigma_y\otimes\sigma_y)\rho^*(\sigma_y\otimes\sigma_y)$. Therefore, for a given density operator $\rho$ the harmony should be easier to compute.

As the harmony is given by a simple expression, it is an interesting question whether its definition can be extended beyond 2-qubit states.

\section*{Acknowledgment}
This research was supported by the Natural Sciences and Engineering Research Council of Canada.

%\newpage

\end{document}